\title{Approximating Node-Weighted $k$-MST\\ on Planar Graphs}
\author{%
  Jarosław Byrka\inst{1}  \and 
  Mateusz Lewandowski\inst{2} \and
  Joachim Spoerhase\inst{3}
}%
\institute{
  Institute of Computer Science, University of Wrocław, Poland,\\
  \email{jby@cs.uni.wroc.pl}
  \and
  Institute of Computer Science, University of Wrocław, Poland,\\
  \email{mlewandowski@cs.uni.wroc.pl}
  \and
  Lehrstuhl für Informatik I, Universität Würzburg, Germany,\\
  \email{joachim.spoerhase@uni-wuerzburg.de}
}
\spnewtheorem{fact}[theorem]{Fact}{\bfseries}{\itshape}
\spnewtheorem{observation}[theorem]{Observation}{\bfseries}{\itshape}
\newcommand{\ie}{\textit{i.e.}}
\newcommand{\eg}{\textit{e.g.}}
\newcommand{\eps}{\varepsilon}
\newcommand{\oh}{\mathcal{O}}
\newcommand{\opt}{\mathrm{OPT}}
\begin{document}

\maketitle

\begin{abstract}
  We study the problem of finding a minimum weight connected subgraph
  spanning at least $k$ vertices on planar, node-weighted graphs. We
  give a $(4+\eps)$-approximation algorithm for this problem. We
  achieve this by utilizing the recent LMP primal-dual
  $3$-approximation for the node-weighted prize-collecting Steiner
  tree problem by Byrka et al (SWAT'16) and adopting an approach by
  Chudak et al. (Math.\ Prog.\ '04) regarding Lagrangian relaxation
  for the edge-weighted variant.  In particular, we improve the
  procedure of picking additional vertices (tree merging procedure)
  given by Sadeghian (2013) by taking a constant number of recursive
  steps and utilizing the limited guessing procedure of Arora and
  Karakostas (Math.\ Prog.\ '06).

  More generally, our approach readily gives a
  $(\nicefrac{4}{3}\cdot r+\eps)$-approximation on any graph class
  where the algorithm of Byrka et al.\ for the prize-collecting
  version gives an $r$-approximation. We argue that this can be
  interpreted as a generalization of an analogous result by Könemann
  et al. (Algorithmica~'11) for partial cover problems.  Together with
  a lower bound construction by Mestre (STACS'08) for partial cover
  this implies that our bound is essentially best possible among
  algorithms that utilize an LMP algorithm for the Lagrangian
  relaxation as a black box. In addition to that, we argue by a more
  involved lower bound construction that even using the LMP algorithm
  by Byrka et al.\ in a \emph{non-black-box} fashion could not beat
  the factor $\nicefrac{4}{3}\cdot r$ when the tree merging step
  relies only on the solutions output by the LMP algorithm.
\end{abstract}

\section{Introduction}
  We consider the node-weighted variant of the well-studied
  $k$-MST problem. Given a graph $G=(V,E)$ with non-negative node
  weights $w\colon V\rightarrow \mathbb{R}_+$ and a positive integer
  $k$, we consider the problem of finding a minimum cost connected
  subgraph of $G$ spanning $k$ vertices. In analogy to the edge-weighted
  case, we call this problem node-weighted $k$-MST (NW-$k$-MST) because
  the solution can be assumed to be a tree. In fact, we focus on the
  rooted variant in which a given vertex $r$ has to be included
  in the final solution. To obtain the unrooted version, simply use
  resulting algorithm for each choice of root vertex.

  It was already observed that this problem
  is $\Omega(\log n)$-hard to approximate
  \cite{DBLP:journals/siamcomp/MossR07}. However, the problem becomes
  easier, when we restrict $G$ to be a planar graph. The focus of this work is
  to provide an approximation algorithm with small factor for this case.

  \subsection{Related work}
    \subsubsection{Edge-weighted $k$-MST}
    The standard, edge-weighted k-MST problem was thoroughly studied.
    In the sequence of papers
    \cite{DBLP:conf/focs/Garg96,DBLP:journals/mp/AroraK06,DBLP:conf/stoc/Garg05}
    the $2$-approximation algorithm for prize-collecting Steiner tree
    problem \cite{DBLP:journals/siamcomp/GoemansW95} was used to
    finally obtain a $2$-approximation algorithm for k-MST. These
    results can be, to some extent, explained as in Chudak et. al
    \cite{DBLP:journals/mp/ChudakRW04} in terms of Lagrangian
    Relaxation. 
    
    In particular the $5$-approximation algorithm follows the framework known mostly from
    Jain and Vazirani's work on the $k$-median problem
    \cite{DBLP:journals/jacm/JainV01}. In these algorithms, the
    Lagrangian multiplier preserving (LMP) property plays a crucial
    role. The LMP property is also satisfied by the Goemans-Williamson
    algorithm for the prize-collecting Steiner tree problem (PC-ST).
    Intuitively, the LMP property of an $\alpha$-approximation algorithm
    for some prize-collecting problem, means that the solutions it produces
    would also be not more expensive than $\alpha$ times
    optimum value even if we would have to pay $\alpha$ times more for penalties.
    \subsubsection{Node-weighted $k$-MST}
      The NW-$k$-MST problem was already studied in the more general quota
      setting, where each node has also associated some profit, and the goal
      is to find the minimum cost connected set of vertices having at least
      some total profit $\Pi$. In particular, $O(\log n)$-approximation
      was given in \cite{DBLP:journals/siamcomp/MossR07}.
      However, this result was based on their invalid
      $O(\log n)$-approximation for NW-PC-ST.
      Recently, Chekuri et. al \cite{DBLP:conf/approx/ChekuriEV12} and also
      independently Bateni et. al \cite{DBLP:conf/icalp/BateniHL13} proposed
      correct algorithms for generalizations of NW-PC-ST,
      but without LMP guarantee.
      The result on the quota problem was finally restored by
      K{\"{o}}nemann et al. \cite{DBLP:conf/focs/KonemannSS13} where an
      LMP algorithm was developed. In the related master thesis
      \cite{SinaMasterThesis}, Sadeghian gives also an alternative way of
      picking vertices\footnote{by picking vertices we mean augmenting
      the smaller solution with some vertices of larger solution.
      This is an important ingredient for the Lagrangian Relaxation technique}
      in the reduction for the quota problem. In these results,
      the constant lost in the process was not optimized.
    \subsubsection{Node-weighted planar Steiner problems} 
      Recently, the planar variants of Steiner problems received increased
      attention. In particular, Demaine et. al 
      \cite{DBLP:journals/talg/DemaineHK14} obtained a $6$-approximation
      for the node-weighted Steiner forest problem. The factor was further
      improved to $3$ by Moldenhauer~\cite{DBLP:journals/iandc/Moldenhauer13}.
      Both results rely on the moat-growing algorithm similar to that of
      Goemans and Williamson.
      Currently the best result for this problem is the $2.4$ approximation
      by Berman and Yaroslavtsev~\cite{BermanY12:nw-nd} where
      they use a different oracle for determining violated sets.
      
      More general network design problems on planar graphs where also studied
      by Chekuri et. al \cite{DBLP:conf/icalp/ChekuriEV12}. Finally,
      the result of Moldenhauer was generalized to the prize-collecting
      variant by Byrka et. al~\cite{DBLP:conf/swat/ByrkaLM16}, resulting in
      an LMP $3$-approximation for NW-PC-ST on planar graphs.
      We note that our result highly relies on this last algorithm.

      \subsubsection{Partial cover}
      Below, we argue that our problem on arbitrary graphs generalizes
      the \emph{partial cover} problem. In this problem we are given a
      set cover instance along with a positive integer $k$. The
      objective is to cover at least $k$ ground elements by a family
      of sets of minimum cost. In the \emph{prize-collecting} version
      of the problem every element has a penalty and the objective is
      to minimize the sum of costs of the chosen sets and the
      penalties of the elements that are not covered. Könemann et
      al.~\cite{KonemannPS11:Partial-Covering} describe a unified
      framework for partial cover. They show how to obtain an
      approximation algorithm for a class~$\mathcal{I}$ of partial
      cover instances if there is an $r$-approximate LMP algorithm for
      the corresponding prize-collecting version. In particular, their
      result implies a $(\frac{4}{3}+\eps)r$-approximation algorithm for the
      class~$\mathcal{I}$. Mestre~\cite{Mestre08:lagrangian-relax-partial-cover}
      shows that no algorithm that uses an LMP algorithm as a black
      box can obtain a ratio better than $\frac{4}{3}r$ so these results are
      essentially optimal.
      
  \subsection{Our result and techniques}
    We give a polynomial-time $(4+\eps)$-approximation algorithm for
    the NW-$k$-MST problem on planar graphs. Our result extends to
    an algorithm for the quota node-weighted Steiner tree problem on
    planar graphs with the same factor.

    The main technique we use is the Lagrangian relaxation framework
    (as mentioned in related works above) where two solutions --- one
    with less and the other with more than $k$ nodes --- are combined
    to obtain a feasible tree. The overview of our algorithm is as follows:
    \begin{enumerate}
      \item{guess a skeleton and prune the instance}
      \item{using the LMP algorithm~\cite{DBLP:conf/swat/ByrkaLM16}, find trees $T_1, T_2$ with
          $\leq k$ and $\geq k$ nodes, respectively}
      \item{combine $T_1$ and $T_2$ to a single tree with exactly $k$ vertices}
    \end{enumerate}
    This is the standard design (although guessing step is not always necessary)
    of algorithms based on Lagrangian relaxation framework.
    However, in order to optimize the constant we employ
    additional ideas and techniques.
    
    The first guessing step bears some similarities to that of Arora and
    Karakostas \cite{DBLP:journals/mp/AroraK06} where they improve
    Garg's $3$-approximation for edge-weighted $k$-MST to $2+\eps$. This
    additional guessing allows them to pay $\eps \cdot \opt$ instead
    of $\opt$ for connecting a single set of vertices to the rest of
    the solution. Here, we provide a node-weighted variant of this idea and
    also use it more extensively, because we have to buy multiple (but
    still a constant number of) such connections. In our approach,
    we guess a set of vertices from optimum solution and call it
    a skeleton. Then, we can safely prune the instance ensuring
    that each remaining node will be not too far away from the skeleton.
    The guessing step is described in Section~\ref{s:pruning}.
  
    For the second step, we have to slightly modify the primal-dual
    LMP $3$-approximation algorithm \cite{DBLP:conf/swat/ByrkaLM16},
    so it returns solutions containing the guessed skeleton. This
    modification is technical and is described --- together with the
    way of finding suitable $T_1$ and $T_2$ ---
    in Section~\ref{s:lagrangian-relaxation-moat-growing}.

    In the third step, we combine $T_1$ and $T_2$ by extending the procedure
    of picking vertices of Sadeghian~\cite{SinaMasterThesis}.
    He finds some cost-effective subset of vertices, which is two times
    larger than needed. We show that by picking vertices in certain order
    and applying recursion a \emph{constant} number of times, we are able
    to pick almost exactly the number of nodes that is needed. Although,
    the number of components of this set might be arbitrary, we need
    to buy only a constant number of connections to restore connectivity.
    This is our main contribution and is described
    in the Section~\ref{s:4-eps-approximation}.

    The resulting approximation factor of our algorithm is $(4+\eps)$.
    Additionally, we show some evidence that our combining step is
    in some sense optimal. More precisely, we show that no other algorithm,
    using LMP $3$-approximation as a black-box and not referring
    to the planarity can give better constant than $4$. This is obtained
    by interpreting our algorithm in terms of the results for the partial cover problem.
    The optimality of our algorithm within this framework is discussed in
    Section~\ref{s:partial-cover}.

\section{Pruning the Instance}\label{s:pruning}
First, we assume that we know $\opt$ up to a factor $1+\eps$ by
using standard guessing techniques~\cite{DBLP:conf/focs/Garg96}. A node $v$ is called \emph{$\eps$-distant} to a node set $U\subseteq V$ if there exists a path $P$ in $G$ from $v$ to a node $u\in U$ of node weight $c(V(P)\setminus\{u\})\leq\eps\cdot\opt$.

\begin{lemma}\label{lem:guessing-an-eps}
  Consider an optimum solution $T$ and an $\eps>0$. Then there exists
  a set $W\subseteq V(T)$ of size at most $1/\eps$ such that each node
  in $T$ is $\eps$-distant to $W\cup\{r\}$.
\end{lemma}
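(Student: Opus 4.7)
The plan is to root the optimum tree $T$ at $r$ and place hubs greedily via a bottom-up traversal, so that removing the hubs together with $r$ decomposes $T$ into pieces of small total node weight; every vertex in a piece then reaches its boundary hub along a short $T$-path, and since $T\subseteq G$ this path witnesses $\eps$-distance.

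Concretely, I would process the nodes of $T$ in post-order and maintain at each node $v$ a residual load $a(v)$ defined by $a(v)=w(v)+\sum_{c} a'(c)$, where the sum is over the children $c$ of $v$ and $a'(c)=0$ if $c$ has already been added to $W$, otherwise $a'(c)=a(c)$. Immediately after computing $a(v)$, if $a(v)\geq\eps\cdot\opt$ then add $v$ to $W$ (so that $a(v)$ is zeroed out for $v$'s parent), otherwise leave $v$ alone. After the traversal, $W$ is the desired set, and $r$ is appended only conceptually through the statement's $W\cup\{r\}$.

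Two facts then need to be verified. First, the size bound: by a charging argument, each node $u\in V(T)$ contributes $w(u)$ to the value $a(v^{\star}(u))$ of at most one hub, namely the nearest strict ancestor (or self) that was declared a hub; this is because, once a hub is placed, its residual is reset before propagating to the parent. Summing over $v\in W$,
\begin{equation*}
  |W|\cdot\eps\cdot\opt \;\leq\; \sum_{v\in W} a(v) \;\leq\; \sum_{u\in V(T)} w(u) \;\leq\; \opt,
\end{equation*}
which yields $|W|\leq 1/\eps$. Second, the distance bound: consider any connected component $C$ of $T-(W\cup\{r\})$ and let $v_C$ be its topmost node; since $v_C$ was not selected, $a(v_C)<\eps\cdot\opt$, and by construction $a(v_C)$ equals the total node weight of $C$. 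Hence, for any $v\in C$, the unique $T$-path from $v$ to the hub $u$ that is the parent of $v_C$ (or to $r$, if $v_C=r$'s child component is handled the same way, taking $u=r$) has all its vertices except $u$ contained in $C$, giving $c(V(P)\setminus\{u\})\leq w(C)<\eps\cdot\opt$. For $v\in W\cup\{r\}$ itself the trivial path certifies $\eps$-distance.

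None of the steps is really an obstacle; the only care is bookkeeping in the charging argument to ensure each node's weight is counted in at most one hub's residual, which is exactly what the reset $a'(c)=0$ guarantees. Edge cases (the root being selected, isolated vertices, nodes whose own weight already exceeds $\eps\cdot\opt$) are handled uniformly since the rule $a(v)\geq\eps\cdot\opt$ fires at $v$ itself and the charge $w(v)$ alone already exceeds the threshold in that case.
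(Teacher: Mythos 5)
Your proof is correct and follows essentially the same strategy as the paper: root $T$ at $r$, sweep bottom-up, place a hub whenever an accumulated quantity exceeds $\eps\cdot\opt$, reset the accumulator at hubs, and bound $|W|$ by a disjoint-charging argument while the remaining light pieces certify $\eps$-distance. The only cosmetic difference is the trigger — the paper fires on the heaviest root-to-node path in the current subtree and charges to that witness path, whereas you fire on the total unresolved subtree mass $a(v)$ and charge to the coverage set — but both witnesses are pairwise disjoint and each has weight at least $\eps\cdot\opt$, so the two arguments are interchangeable.
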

\begin{proof}
  Consider $T$ as a tree rooted at $r$.  For any node $u$ in this tree
  let $T_u$ denote the subtree hanging from $u$. A subtree $T_u$ is
  called \emph{good} if for any node in $T_u$ the node weight
  (including the weight of the end nodes) of the unique path from this
  node to $u$ within $T_u$ is at most $\eps\cdot\opt$.

  We traverse $T$ in a bottom-up fashion starting with the leaves. We
  maintain the invariant (by removing subtrees) that for all nodes $u$
  visited so far and still being in $T$, the subtree $T_u$ is good. To
  this end, when we encounter a node $u$ such that $T_u$ is good we
  just continue with the traversal. If $T_u$ is bad, however, then
  there must be a path $P$ within $T_u$ ending in $u$ of node weight
  $c(P)\geq \eps\cdot\opt$. We include $u$ into $W$ and assign $P$ as
  a \emph{witness} to $u$. Because of our invariant for all (if any)
  children $v$ of $u$, we have that $T_v$ is good. This means in
  particular that for all nodes $z$ in $T_u$ the node weight
  (\emph{excluding} the weight of $u$) of the path from $z$ to $u$ is
  at most $\eps\cdot \opt$. Finally, remove $T_u$ from $T$ and
  continue with the traversal. We stop when we reach the root $r$ at
  which point we remove the remaining tree (for the sake of analysis).

  First, note that the set $W$ has cardinality at most $1/\eps$
  because we assigned to each node in $W$ a witness path of weight at
  least $\eps\cdot\opt$ and because the witness paths are pairwise
  node-disjoint. Second, observe that whenever we removed a node $z$
  from $T$ as part of a subtree $T_u$, the node weight (excluding the
  weight of $u$) of the path from $z$ to $u$ was at most
  $\eps\cdot\opt$. Hence, for every node in $T$ there exists such a
  path to a node in $W\cup\{r\}$ at the end of the tree traversal
  since every node was removed.\qed
\end{proof}
In the sequel, we will call such a set $W$ whose existence is provided
by the above lemma an \emph{$\eps$-skeleton}.

In a pre-processing, we iterate over all $n^{\oh(1/\eps)}$ many
sets $W'\subseteq V$ with $|W'|\leq 1/\eps$ thereby guessing the
$\eps$-skeleton $W$ whose existence is guaranteed by the above
lemma. Moreover, we prune all nodes $u$ from the instance that are not
$\eps$-distant to $W\cup\{r\}$.

\section{The $(4+\eps)$-Approximation Algorithm}
  \label{s:4-eps-approximation}
  In Chapter 3 of \cite{SinaMasterThesis} Sadeghian describes a $O(\log n)$
  approximation for node-weighted quota Steiner tree problem. His result is
  established using a framework of \cite{DBLP:journals/mp/ChudakRW04},
  repeated also in \cite{DBLP:journals/siamcomp/MossR07} where a primal-dual
  LMP approximation algorithm for the prize-collecting Steiner tree problem can be
  used along with the Lagrangian relaxation method to obtain an approximation
  algorithm for the quota version of the problem. Sadeghian looses some large
  constant factor in the process. Direct application of his result would
  yield two digit approximation factor for our problem.
  
  We now show that carefully injecting the LMP 3-approximation algorithm for
  NW-PC-ST on planar graphs given in \cite{DBLP:conf/swat/ByrkaLM16} into his
  analysis yields a $(4+\eps)$-approximation. However, in the process, we need
  a more efficient way to pick additional vertices. We show that it is possible
  to pick a cheap set of these vertices. Although it will not be connected,
  only a \begin{it}constant\end{it} number of additional $\eps$-distant vertices will suffice to restore the connected tree.

  For ease of the presentation, we will focus on the NW-$k$-MST problem.
  The algorithm for quota version can be then easily deduced by arguments
  of Bateni et. al~\cite{DBLP:conf/icalp/BateniHL13}

  The analysis relies on the following lemma.
  \begin{lemma}
  \label{lem:framework}
    We can produce trees $T_1$ and $T_2$ containing all the vertices $W$ from
    the $\eps$-skeleton and the root $r$ of sizes $|T_1|\leq k \leq |T_2|$, such
    that for $\alpha_1,\alpha_2\geq 0$ with $\alpha_1+\alpha_2=1$ and $\alpha_1|T_1| + \alpha_2|T_2| = k$ we have that
    $$\alpha_1 c(T_1) + \alpha_2 c(T_2) \leq (3+\eps) OPT$$
  \end{lemma}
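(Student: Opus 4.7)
The plan is to follow the template of Chudak, Roughgarden and Williamson~\cite{DBLP:journals/mp/ChudakRW04}, feeding the LMP $3$-approximation of~\cite{DBLP:conf/swat/ByrkaLM16} with a family of prize-collecting instances parameterized by a Lagrange multiplier $\lambda\ge 0$. The key twist compared to the standard argument is that both $T_1$ and $T_2$ must contain the entire skeleton $W\cup\{r\}$; this is arranged by a slight modification of the primal--dual algorithm, the details of which are deferred to Section~\ref{s:lagrangian-relaxation-moat-growing}.

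First, I would build the PC instance at level $\lambda$ by assigning uniform penalty $\pi_v=\lambda$ to every remaining non-root, non-skeleton vertex and forcing $W\cup\{r\}$ into the solution. Let $T(\lambda)$ denote the tree produced by the modified LMP algorithm. Since the optimum $k$-MST tree $T^*$ contains $W\cup\{r\}$ (by Lemma~\ref{lem:guessing-an-eps}) and has at least $k$ vertices, it is feasible for this PC instance with PC-cost at most $\opt+\lambda(n-k)$. Combining this with the LMP inequality
\[
c(T(\lambda)) + 3\lambda\,|V\setminus V(T(\lambda))| \;\le\; 3\bigl(\opt+\lambda(n-k)\bigr)
\]
and rearranging yields $c(T(\lambda))\le 3\,\opt + 3\lambda\bigl(|V(T(\lambda))|-k\bigr)$.

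Second, I would locate two multipliers $\lambda_1\le\lambda_2$ by binary search so that $T_1:=T(\lambda_1)$ has $|T_1|\le k$ and $T_2:=T(\lambda_2)$ has $|T_2|\ge k$. For $\lambda=0$ the algorithm returns essentially the (approximate) Steiner tree on the forced terminals, which after pruning has at most $k$ vertices; for $\lambda$ exceeding $\opt$ every remaining vertex is cheaper to include than to pay for, so $|T(\lambda)|\ge k$. Standard bisection brings the gap $\lambda_2-\lambda_1$ below $\eps\,\opt/(3n)$ in polynomially many steps, using the $(1+\eps)$-guess of $\opt$ from Section~\ref{s:pruning}; if any intermediate $\lambda$ produces a tree of size exactly $k$ we are trivially done.

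Third, choosing $\alpha_1,\alpha_2\ge 0$ with $\alpha_1+\alpha_2=1$ and $\alpha_1|T_1|+\alpha_2|T_2|=k$ forces $\alpha_1(|T_1|-k)+\alpha_2(|T_2|-k)=0$, so combining the two rearranged LMP bounds gives
\[
\alpha_1 c(T_1)+\alpha_2 c(T_2)
\;\le\; 3\,\opt + 3(\lambda_2-\lambda_1)\,\alpha_2\,(|T_2|-k)
\;\le\; (3+\eps)\,\opt,
\]
using $\alpha_2\le 1$, $|T_2|-k\le n$, and the binary-search precision. The main obstacle I anticipate is constructing the modified LMP $3$-approximation whose output is guaranteed to contain $W\cup\{r\}$ while still satisfying the LMP inequality with the same constant $3$; this requires a careful adjustment of the moat-growing algorithm of~\cite{DBLP:conf/swat/ByrkaLM16} and is precisely what Section~\ref{s:lagrangian-relaxation-moat-growing} is devoted to. Once this primitive is in hand, the Lagrangian-relaxation computation sketched above is essentially textbook.
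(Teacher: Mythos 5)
Your proof follows essentially the same route as the paper: Lagrangian relaxation of the cardinality constraint, feeding uniform-penalty prize-collecting instances to the modified LMP algorithm, binary search for $\lambda_1,\lambda_2$ with gap below $\eps\cdot\opt/(3n)$, and combining the two LMP bounds using $\alpha_1(|T_1|-k)+\alpha_2(|T_2|-k)=0$. The only presentational difference is that the paper manipulates the dual solutions $\mathrm{DS}_\lambda$ explicitly and observes that $\alpha_1\mathrm{DS}_1+\alpha_2\mathrm{DS}_2$ is feasible for $\mathrm{DLR}(\lambda_2)$, whereas you instead bound $c(T_i)+3\lambda_i(k-|T_i|)\le 3\,\opt$ directly via the primal PC optimum ($T^*$ being feasible for the PC instance at cost $\le\opt+\lambda(n-k)$); these are equivalent by weak duality. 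One small correction: you assert that $T(0)$ has at most $k$ vertices after pruning, but as the paper notes this need not hold, since moats grow around $W$ and the algorithm may already connect more than $k$ vertices at $\lambda=0$; in that case, however, $T(0)$ is a $3$-approximation outright, so the binary search is simply skipped and the overall argument is unaffected.
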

  The construction of these trees $T_1$ and $T_2$ and the proof of above lemma
  is described in Section~\ref{s:lagrangian-relaxation-moat-growing}.

  Let now $q=k-|T_1|$ be the number of vertices that are missing from the
  tree $T_1$. We will now show, that these vertices can be picked from
  $T_2\setminus T_1$ without paying too much.
  \begin{lemma}\label{l:merging}
    It is possible to find a (not necessarily connected) set $S$
    of at least $q$ vertices in $T_2\setminus T_1$ of cost at most
    $(1+\eps_2) \alpha_2 c(T_2)$,
    which can be connected to $T_1$ by connecting additionally $\oh(\log(1/\eps_2))$ many
    $\eps$-distant vertices  to the $\eps$-skeleton, where $\eps_2$ is any constant.
  \end{lemma}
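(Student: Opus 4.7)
The plan is to build $S$ by a two-phase scheme: a Sadeghian-style base construction that overshoots the target count by at most a factor of two, followed by a constant number of recursive refinements that halve the excess at each step. First I would root $T_2$ at $r$ and, for every vertex $v\in T_2$, keep track of $f(v):=|T_2(v)\setminus T_1|$ and $g(v):=c(T_2(v)\setminus T_1)$, where $T_2(v)$ is the subtree of $T_2$ hanging from $v$. Starting at $r$ (which has $f(r)\ge q$), I would repeatedly descend to the child maximising $f$, stopping at the first vertex $v^{\ast}$ all of whose children $u$ satisfy $f(u)<q$. At $v^{\ast}$ I would greedily add children's subtrees $T_2(u)$ in order of increasing density $g(u)/f(u)$ until the accumulated non-$T_1$ vertex count first reaches $q$. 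The resulting set $S_0\subseteq T_2\setminus T_1$ satisfies $q\le|S_0|<2q$; combining the standard average-density argument with the bound $c(T_2(v^{\ast}))/f(v^{\ast})\le c(T_2)/(|T_2|-|T_1|)=\alpha_2 c(T_2)/q$ yields $c(S_0)\le 2\alpha_2 c(T_2)$.

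Next, I would shrink the excess recursively. Let $\Delta_0:=|S_0|-q$, which is contained inside the last subtree $C^{\ast}$ added during the greedy process. I would invoke the base construction inside $C^{\ast}$ with a reduced target of $|C^{\ast}|-\Delta_0$ vertices and substitute the resulting cheaper subset for what we originally took from $C^{\ast}$. A careful accounting --- using that the greedy density ordering keeps the cheapest vertices and evicts the most expensive ones --- should show that both the excess size $\Delta_i$ and the excess cost $c(S_i)-\alpha_2 c(T_2)$ halve at each step. After $m=\lceil\log_2(1/\eps_2)\rceil$ iterations this yields the desired set $S$ with $|S|\ge q$ and $c(S)\le(1+\eps_2)\alpha_2 c(T_2)$.

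Finally, for connectivity I would observe that each recursion level replaces a single subtree of $T_2$ by a smaller one, so $S$ is a disjoint union of $\oh(\log(1/\eps_2))$ subtrees of $T_2$. Since the pruning in Section~\ref{s:pruning} guarantees that every remaining vertex of the instance is $\eps$-distant to $W\cup\{r\}\subseteq V(T_1)$, I would attach each component of $S$ to $T_1$ via a single path of node weight at most $\eps\cdot\opt$ whose internal vertices are $\eps$-distant to $W\cup\{r\}$. Altogether this uses the claimed $\oh(\log(1/\eps_2))$ extra $\eps$-distant vertices to restore connectivity.

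The hard part will be the halving invariant for the recursive step: I need to argue simultaneously that the excess count and the excess cost both halve, and that only $\oh(1)$ new tree components are introduced per level. This hinges on confining each refinement to the ``last'' greedy slot $C^{\ast}$, so that the number of incisions into $T_2$ grows linearly rather than geometrically with the depth of the recursion, while still maintaining at every level the invariant that the vertices kept are cheapest-per-capita among that level's candidates.
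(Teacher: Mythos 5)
Your two-phase plan (overshoot by at most a factor of two, then refine the overshoot recursively for $\oh(\log(1/\eps_2))$ levels, and reconnect via $\eps$-distant paths) mirrors the paper's structure, but two of the concrete steps break. The first is the asserted density bound $c(T_2(v^{\ast}))/f(v^{\ast})\le c(T_2)/(|T_2|-|T_1|)$: descending to the child maximizing the \emph{count} $f$ says nothing about per-vertex cost, so $T_2(v^{\ast})$ can be the most expensive branch of $T_2$ and the bound is simply false. The paper never descends. It contracts $T_1\cap T_2$ to a single zero-cost vertex $r'$, takes the whole resulting tree $T'_2$ (trivially of the target density), and at each step cuts one edge and keeps the \emph{cost-effective} side (removing or contracting the other), so density at most $c(T'_2)/|T'_2|$ is maintained as an invariant rather than something to be recovered after a descent.

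The second, more central gap is that ordering the candidate pieces by increasing density does not make the recursive target halve. The paper's key move is to order the star's leaves by \emph{decreasing super-cardinality} $s_1\ge s_2\ge\dots$, take the minimal prefix with $\sum_{j\le i}s_j + s_{i+1}\ge q$, set $t=\sum_{j\le i}s_j$, and recurse inside $s_{i+1}$ with $q'=q-t$. The sorting gives $s_{i+1}\le s_1\le t$, hence $2t\ge t+s_{i+1}\ge q$ and $q'\le q/2$, which is exactly what makes $l=\oh(\log(1/\eps_2))$ levels suffice. With your density ordering, the last subtree $C^{\ast}$ added may be arbitrarily larger than everything taken before it, so $t$ can be tiny and the reduced target $q-t$ close to $q$; nothing forces fast termination. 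You flag the halving invariant as ``the hard part''---it is precisely the part that fails under the density order. In the paper, density is used only to certify cost (cost-effectiveness is an invariant of the splitting process); the geometric decrease of the target is driven solely by the cardinality order. A minor further point: at each level your $S_0$ is a union of several sibling subtrees, so $S$ is \emph{not} a disjoint union of only $\oh(\log(1/\eps_2))$ subtrees of $T_2$; the paper instead observes that the contracted graph is a star with a non-super centre $c$, buys $c$ to connect that level's leaves, and then attaches one $\eps$-distant path per level from those $\oh(\log(1/\eps_2))$ centres.
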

  \begin{proof}
    Here, we substantially extend the analysis in \cite{SinaMasterThesis}.
    Consider a graph $T'_2$ constructed from $T_2$ by contracting all
    vertices from $T_1 \cap T_2$ to a single vertex $r'$. Define the cost
    of this vertex $r'$ to $0$ (we will buy $T_1$ anyway). From now on,
    whenever we count the cardinality of some subset $S$ of vertices in $T'_2$,
    we do not count vertex $r'$.
    \begin{definition}
    A subset of vertices $S$ is cost-effective if
    $\frac{c(S)}{|S|} \leq \frac{c(T'_2)}{|T'_2|}$.
    \end{definition}
    \begin{lemma}
    If cost-effective set $S$ has size $(1+\eps_2)q$ then its cost
    is at most $(1+\eps_2) \alpha_2 c(T_2)$.
    \end{lemma}
    \begin{proof}
    $$
      c(S) \leq |S| \frac{c(T'_2)}{|T'_2|}
           \leq (1+\eps_2) q \frac{c(T_2)}{|T_2|-|T_1|}
           \leq (1+\eps_2) \alpha_2 c(T_2),
    $$
    where we used the fact that $\alpha_2=\frac{k-|T_1|}{|T_2|-|T_1|}$.\qed
    \end{proof}
    So now, our goal is to find a cost-effective set $S$ in $T'_2$ of size
    only slightly larger that $q$. First, we start with a procedure
    for picking at most $2q$ vertices as in \cite{SinaMasterThesis}.
    Initialize graph $H$ with any spanning tree of $T'_2$. Observe that
    $H$ is cost-effective. Consider any edge $e$ of $H$.
    Let $X$ and $Y$ be the two components that would be created
    after removing the edge $e$ from $H$. At least one of these two components
    must be cost-effective. For any cost-effective component from this two,
    say $X$, do the following. If $X$ has enough vertices, $\ie$ $|X|\geq q$,
    remove $Y$ from $H$ and continue. Otherwise, contract vertices of $X$
    to a single super vertex and set its cost to the sum of all vertices
    in $X$. Also, define the super-cardinality of this new super vertex
    to $|X|$.
    
    It can be seen that after repeating this procedure as many times
    as possible, the graph $H$ will be a star graph with super-cardinality of each leaf
    at most $q$. Let $p$ be the number of leaves of $H$.
    In the case when $p \leq 1$ it is easy to see, that taking the
    whole graph $H$ would result in a cost-effective set of vertices
    of size at most $2q$. Therefore, assume now that $p \geq 2$.
    Then, there exists a central vertex of the star graph $H$,
    call it $c$, which is not a super vertex.
    Moreover, every leaf $v$ must be cost-effective
    (otherwise either we would remove $v$, or $H$ would consist of two
    nodes). Observe also, that the super-cardinality of each leaf
    is at most $q$. Hence adding leaves to $S$ one by one, would eventually
    lead to the set $S$ with super-cardinality at most $2q$ (and at least $q$).
    Finally, $S$ could be connected to $T_1$ by a single path from vertex $c$.

    We now modify this procedure of adding leaves. First, consider them in the
    order of decreasing super-cardinalities. To this end, 
    let $v_1, v_2, \dots v_p$ be leaves of $H$ and 
    $s_1 \geq s_2 \geq \dots \geq s_p$ be the corresponding super-cardinalities.
    Find the smallest $i$ such that
    $\sum_{j=1}^{i} s_j + s_{i+1} \geq q$.
    If $s_{i+1} = 1$, then the desired set $S$ consist of all vertices in
    $v_1, v_2, \dots v_{i+1}$ and it has exactly $q$ vertices.
    Otherwise, add the first $i$ leaves to the set $S$.
    Let $t=\sum_{j=1}^{i} s_j$ be the number of vertices added to $S$.
    Now, instead of adding to $S$ all vertices in the super vertex $s_{i+1}$,
    we expand this super vertex back to the original graph and repeat the above
    process with the new number of vertices to pick equal to $q' = q-t$.
    Observe that, because of sorting we have that $t \geq \frac{1}{2}q$,
    which also implies that $q' \leq \frac{1}{2}q$.
    This process is repeated recursively up to $l$ times---where $l$ is a parameter---
    but in the last call we take the last leaf completely.

    Let now $q_1,q_2,\dots,q_l$ be the numbers of vertices to pick in
    respective recursive calls
    (note that $q_1=q$ and $q_j \leq \frac{1}{2} q_{j-1}$).
    The total number of picked vertices is then at most
    $q+2q_l \leq (1+2^{-l+2})q$. Therefore, to find
    the desired set $S$ of at most $(1+\eps_2)q$ vertices, we need only
    a constant number of recursive calls ---
    parameter $l$ is only $\oh(\log(1/\eps_2))$.
    Moreover all the vertices of $S$
    can be connected to $T_1$ by buying paths from the central nodes of
    all the $l$ star graphs that appeared in the process.
    This finishes the proof.\qed
  \end{proof}
  
  To construct a feasible solution, take the set $S$ guaranteed by the
  above lemma and connect it to $T_1$ by the $\oh(\log(1/\eps_2))$
  shortest paths to the $\eps$-skeleton. Denote this solution by
  $\mathrm{SOL}_1$.  Let also $\mathrm{SOL}_2$ be the entire tree
  $T_2$.  Our algorithm outputs cheaper of the two solutions
  $\mathrm{SOL}_1$ and $\mathrm{SOL}_2$.

This enables us to prove the following.
   \begin{lemma}\label{lem:four+eps-appr-algor}
     Assuming $\eps\leq 1$, the cost of the cheaper of the two
     solutions $\mathrm{SOL}_1$ and $\mathrm{SOL}_2$ is
     $(4+O(\sqrt{\eps}))\cdot\opt$.
   \end{lemma}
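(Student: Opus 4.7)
My plan is to bound $\min(c(\mathrm{SOL}_1),c(\mathrm{SOL}_2))$ by a convex combination $\beta\, c(\mathrm{SOL}_1) + (1-\beta)\, c(\mathrm{SOL}_2)$ chosen so that, after substituting Lemma~\ref{l:merging}'s estimate $c(\mathrm{SOL}_1) \leq c(T_1) + (1+\eps_2)\alpha_2 c(T_2) + C\eps\log(1/\eps_2)\cdot\opt$ and $c(\mathrm{SOL}_2)\leq c(T_2)$, the right-hand side collapses to $\rho\bigl(\alpha_1 c(T_1) + \alpha_2 c(T_2)\bigr) + O(\eps\log(1/\eps_2))\cdot\opt$. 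Lemma~\ref{lem:framework} then immediately supplies a bound of $\rho(3+\eps)\cdot\opt$ plus the additive error. The two free parameters $\beta$ and $\eps_2$ are tuned to minimise $\rho$ and to balance the additive term against the multiplicative slack, respectively.

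Concretely, the combination produces coefficient $\beta$ on $c(T_1)$ and $\beta(1+\eps_2)\alpha_2 + (1-\beta)$ on $c(T_2)$. Matching these to $\rho\alpha_1$ and $\rho\alpha_2$ forces $\beta = \rho\alpha_1$ and, by a short calculation, $\rho = 1/\bigl(1-\alpha_1\alpha_2(1+\eps_2)\bigr)$. Since $\alpha_1+\alpha_2=1$ implies $\alpha_1\alpha_2\leq 1/4$, the worst case is $\alpha_1=\alpha_2=1/2$, giving $\rho \leq 4/(3-\eps_2)$.

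Choosing $\eps_2 = \sqrt\eps$ balances the two error sources: the multiplicative factor $\tfrac{4(3+\eps)}{3-\sqrt\eps}$ expands to $4 + O(\sqrt\eps)$, and the additive term $C\eps\log(1/\sqrt\eps) = O(\eps\log(1/\eps))$ is $o(\sqrt\eps)$ for $\eps\leq 1$. Summing the two contributions yields the desired $(4+O(\sqrt\eps))\cdot\opt$ estimate.

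The only subtlety, and the main obstacle I anticipate, is ensuring that $\beta=\rho\alpha_1 \in [0,1]$: a short check shows $\beta>1$ exactly when $\alpha_1 > 1/(1+\eps_2)$, i.e., when $|T_1|$ is within a $(1+\eps_2)$ factor of $k$ and $T_2$ may be arbitrarily large. To cover this corner I would clamp $\beta=1$ and eliminate $c(T_2)$ from the bound on $c(\mathrm{SOL}_1)$ by plugging in $\alpha_2 c(T_2) \leq (3+\eps)\opt - \alpha_1 c(T_1)$ from Lemma~\ref{lem:framework}; the coefficient of $c(T_1)$ then becomes $1 - \alpha_1(1+\eps_2) \leq 0$ and the estimate collapses to roughly $(1+\eps_2)(3+\eps)\cdot\opt + O(\eps\log(1/\eps))\cdot\opt = (3+O(\sqrt\eps))\cdot\opt$, comfortably within the target. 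No deeper structural argument beyond Lemmas~\ref{lem:framework} and~\ref{l:merging} is needed; the task reduces to the two-parameter optimisation sketched above.
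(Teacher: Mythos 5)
Your proof is correct, and it takes a genuinely different route from the paper's. The paper sets $\alpha=\alpha_2$ and $\beta=c(T_1)/\opt$, writes the two solution bounds in the form $\frac{r(1+\delta)-(1-\alpha)\beta}{\alpha}$ and $r(1+\delta)+\alpha\beta$, and then invokes a closed-form max--min estimate due to Könemann et al.\ (stated as Lemma~\ref{lem:alpha-beta-r-estimation} in the appendix) to conclude. You instead take a convex combination $\beta\, c(\mathrm{SOL}_1)+(1-\beta)\, c(\mathrm{SOL}_2)$ and tune $\beta$ so that the coefficients of $c(T_1)$ and $c(T_2)$ become proportional to $\alpha_1$ and $\alpha_2$, which lets Lemma~\ref{lem:framework} collapse the expression directly to $\rho(3+\eps)\opt$ plus the additive connection cost; the algebra $\rho=1/(1-\alpha_1\alpha_2(1+\eps_2))\leq 4/(3-\eps_2)$ and the choice $\eps_2=\sqrt\eps$ then give $4+O(\sqrt\eps)$. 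This is essentially a primal derivation of the same bound that the Könemann lemma packages in dual/min-max form, so the two are morally equivalent, but yours is self-contained and does not require citing or reproving that lemma. You also correctly spot and handle the corner case $\alpha_1>1/(1+\eps_2)$ (where the combination weight would exceed $1$) by clamping $\beta=1$ and observing that the $c(T_1)$-coefficient $1-(1+\eps_2)\alpha_1$ becomes nonpositive, reducing the bound to roughly $(3+O(\sqrt\eps))\opt$; the paper avoids this case-split implicitly because the cited lemma already incorporates the constraint $\beta\in[0,r]$. One small remark: with $\eps_2=\eps$ rather than $\eps_2=\sqrt\eps$ your argument actually yields the slightly sharper $4+O(\eps\log(1/\eps))$, matching the paper's parameter choice, but $\eps_2=\sqrt\eps$ still delivers the claimed $4+O(\sqrt\eps)$.
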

   \begin{proof}[Sketch]
     The derivations are similar to those of Könemann et al. and we thus refer for the full version to Appendix~\ref{sec:proof-lemma-refl}. Using $\alpha=\alpha_2$ and $\beta=c(T_1)/\opt$ we can infer
     \begin{align*}
     c(\mathrm{SOL}_1) & \leq \left(3(1+\eps_2) + \alpha \beta \right)
     \cdot \opt + \eps \cdot \oh(\log(1/\eps_2)) \cdot \opt\textnormal{,\quad and}\\
     c(\mathrm{SOL}_2) & = c(T_2) \leq
     \frac{3(1+\eps)-(1-\alpha)\beta}{\alpha} \cdot \opt\,.
   \end{align*}
   Balancing these various parameters
   $\alpha,\beta,\epsilon,\epsilon_2$ yields the bound.\qed
   \end{proof}

\section{Lagrangian Relaxation and Moat Growing on Planar Graphs}
  \label{s:lagrangian-relaxation-moat-growing}
  In this section we prove Lemma~\ref{lem:framework}. The proof utilizes
  Lagrangian Relaxation and follows a framework
  similar to the one in~\cite{DBLP:journals/mp/ChudakRW04}.

  We start with the following LP relaxation for the NW-$k$-MST problem, where solutions 
  are additionally constrained to contain all guessed vertices $W$
  of the $\eps$-skeleton. For each vertex $v$ we have the $x_v$ variable
  indicating whether we will include this vertex in the solution.
  The $z$ variables are indexed by sets of vertices not containing
  the root and the guessed vertices. In the optimum integral solution,
  only the one $z_X$ variable is set to $1$. This would be for the set
  $X$ of vertices not included in the final solution.

  \begin{align}
    \text{min } & \sum_{v \in V\setminus\{r\}} x_v c_v & (LP)      \nonumber\\
    s.t.                                                           \nonumber\\
      & \sum_{v \in \Gamma(S)} x_v
      + \sum_{\substack{X:S\subseteq X \\ X\cap W=\emptyset}}z_X \geq 1 &
        \forall S \subseteq V\setminus\{r\}                        \nonumber\\
      & x_v + \sum_{\substack{X:v\in X \\ X\cap W=\emptyset}}z_X \geq 1 &
        \forall v \in V\setminus\{r\}                              \nonumber\\
      & \sum_{X \subseteq V\setminus\{r\}} |X|z_X \leq n-k
                                                     \label{con:cardinality}\\
      & x_v \geq 0 &
        \forall v \in V\setminus\{r\}                              \nonumber\\
      & z_X \geq 0 &
        \forall X \subseteq V\setminus\{r\}                        \nonumber
  \end{align}
  The first two types of constraints guarantee connectivity of the solution to the root
  vertex and skeleton $W$. The $\Gamma(S)$ denotes the neighborhood of the set
  $S$, $\ie$ the set of vertices that are not in $S$, but have a neighboring
  vertex in $S$.

  The constraint~(\ref{con:cardinality}) ensures that the final solution will have
  at least $k$ vertices and introduces difficulties. Therefore, we move it
  to the objective function obtaining the following
  Lagrangian Relaxation:

  \begin{align*}
    \text{min } & \sum_{v \in V\setminus\{r\}} x_v c_v
    + \lambda\left(\sum_{X \subseteq V\setminus\{r\}} |X|z_X - (n-k)\right) &
      \left(LR(\lambda)\right)                                              \\
    s.t.                                                                    \\
      & \sum_{v \in \Gamma(S)} x_v
      + \sum_{\substack{X:S\subseteq X \\ X\cap W=\emptyset}}z_X \geq 1 &
        \forall S \subseteq V\setminus\{r\}                                 \\
      & x_v + \sum_{\substack{X:v\in X \\ X\cap W=\emptyset}}z_X \geq 1 &
        \forall v \in V\setminus\{r\}                                       \\
      & x_v \geq 0 &
        \forall v \in V\setminus\{r\}                                       \\
      & z_X \geq 0 &
        \forall X \subseteq V\setminus\{r\}
  \end{align*}
  The above LP (ignoring the constant $-\lambda(n-k)$ term in the objective
  function) is exactly the LP for the node-weighted prize-collecting Steiner
  tree (NW-PC-ST in which the penalty of each vertex in $V'=V \setminus W$ is equal to the
  parameter $\lambda$) with a slight modification that the subset of vertices
  $W$ is required to be in the solution.

  Consider now, the dual of the $LR(\lambda)$:
  \begin{align*}
  \text{max } & \sum_{S \subseteq V\setminus\{r\}} y_S
  + \sum_{v \in V\setminus\{r\}} p_v - \lambda(n-k) &
    \left(DLR(\lambda)\right)                                               \\
  s.t.                                                                      \\
    & \sum_{S: v \in \Gamma(S)} y_S
    + p_v \leq c_v &
      \forall v \in V\setminus\{r\}                                         \\
    & \sum_{X \subseteq S} y_X + \sum_{v \in S} p_v \leq \lambda |S| &
      \forall S \subseteq V'\setminus\{r\}                                  \\
    & y_S \geq 0 &
      \forall S \subseteq V\setminus\{r\}                                   \\
  \end{align*}
  Now, the slightly modified primal-dual LMP $3$-approximation for (NW-PC-ST)
  given in~\cite{DBLP:conf/swat/ByrkaLM16} can be used with penalties $\lambda$
  to produce the tree $T^\lambda$ and the dual solution $(y^\lambda,p^\lambda)$
  such that
  \begin{align}
    c(T^\lambda) + 3 \lambda(n-|T^\lambda|) \leq 3
    \left(\sum_{S \subseteq V\setminus\{r\}} y^\lambda_S
    + \sum_{v \in V\setminus\{r\}} p^\lambda_v\right) ,
    \label{e:lmp-guarantee}
  \end{align}
  where $T^\lambda$ contains all vertices of $W$.
  The description of this algorithm is deferred to
  Subsection~\ref{ss:moat-growing}. Let us now see how we
  can use it to finish the proof of Lemma~\ref{lem:framework}.
  We proceed essentially as in \cite{SinaMasterThesis} and
  \cite{DBLP:journals/mp/ChudakRW04}. By subtracting
  $3\lambda(n-k)$ from both sides of inequality~(\ref{e:lmp-guarantee})
  and simplifying the notation so that
  $\mathrm{DS}_\lambda = 
  \sum_{S \subseteq V\setminus\{r\}} y^\lambda_S
  + \sum_{v \in V\setminus\{r\}} p^\lambda_v$ denotes the value of a
  dual solution we have that
  \begin{align*}
    c(T^\lambda) + 3 \lambda(k-|T^\lambda|) &
      \leq 3 \left(\mathrm{DS}_\lambda - \lambda(n-k)\right) \\
    & \leq 3 \cdot \mathrm{DLR}(\lambda) \leq 3\cdot\opt .
  \end{align*}
  Observe that for $\lambda=0$ the algorithm could output a tree with at least
  $k$ vertices (because of moats growing around vertices in $W$, see next subsection).
  In this case the resulting tree is a $3$-approximation 
  so we do not need the merging procedure described
  in Section~\ref{s:4-eps-approximation}.
  Otherwise, for some large $\lambda$, $\eg$ the maximum cost of a vertex,
  the resulting tree would contain all the vertices.
  Therefore, we do the binary search for $\lambda$ such that
  $|T^\lambda|$ is close to $k$.
  In a lucky event $|T^\lambda|=k$ and then we don't need the merging
  procedure described in Section~\ref{s:4-eps-approximation}.
  Otherwise, we obtain $\lambda_1$ and $\lambda_2$ such that
  $|T^{\lambda_1}| < k < |T^{\lambda_2}|$. By making enough steps of
  the binary search we can ensure that
  $\lambda_2 - \lambda_1 \leq \frac{\eps \cdot \opt}{3n}$.
  Let these trees be $T_1$ and $T_2$.
  Now, by setting
  $\alpha_1=\frac{|T_2|-k}{|T_2|-|T_1|}$ and
  $\alpha_2=\frac{k-|T_1|}{|T_2|-|T_1|}$ and
  using inequality~(\ref{e:lmp-guarantee}) twice we have that
	\begin{align*}
		\alpha_1c(T_1) + \alpha_2c(T_2)&\leq
		  3\left(\alpha_1\mathrm{DS}_1 + \alpha_2\mathrm{DS}_2
      - \alpha_1\lambda_1(n-|T_1|) - \alpha_2\lambda_2(n-|T_2|)\right)\\
		&\leq
      3\left(\alpha_1\mathrm{DS}_1 + \alpha_2\mathrm{DS}_2
      - \lambda_2(n-k) + (\lambda_2-\lambda_1)(n-|T_1|)\right)\\
    &\leq
		  3\left(\opt+(\lambda_2-\lambda_1)n\right)\\
		&\leq
		  \left(3+\eps\right) \opt ,
	\end{align*}
  where we used the fact that the convex combination of $\mathrm{DS}_1$
  and $\mathrm{DS}_2$ is a feasible solution for $\mathrm{DLR}(\lambda_2)$.

  \subsection{Moat Growing} \label{ss:moat-growing}
    In this subsection we describe the slight technical modification needed
    in the primal-dual algorithm for NW-PC-ST problem on planar graphs given
    in \cite{DBLP:conf/swat/ByrkaLM16}. We also give a short description
    of the resulting algorithm for completeness.
    Observe, that there are two differences in the LPs used.

    First, we have additional constraints and
    corresponding dual variables $p_v$. This is due to the fact, that in our
    setting all vertices can have both nonzero penalty and cost, while in the
    previous setting the reduction step was employed so that each vertex is
    a terminal with some penalty and zero cost or a Steiner vertex with zero
    penalty. However, this reduction step is equivalent to setting $p_v$
    to minimum of cost and penalty and defining the reduced costs
    and reduced penalties. This does not influence the approximation factor, nor
    the LMP guarantee. See also Section~2.1 of Sadeghian~\cite{SinaMasterThesis}
    for details.

    The second modification comes from the fact that we have to include some guessed
    vertices $W$ in the solution. However, it is enough to treat these vertices
    in the same way as terminals.

    Now, we give a description of the algorithm.
    First, we do the mentioned reduction of eliminating $p_v$ variables. This
    makes some vertices terminal and the other Steiner vertices. We also
    add all the guessed vertices to the set of terminals and set their penalty
    to infinite.

    The algorithm maintains a set of moats, $\ie$, a family of disjoint sets
    of vertices. In each step, these moats can be viewed as the components
    of the graph induced by the so far bought nodes. Each moat has an associated potential
    equal to the total penalty of vertices inside this moat minus the sum of
    the dual variables for all the subsets of this moat. The moats with
    positive potential are active, with an exception that the moat containing the
    root is always inactive.

    The algorithm raises simultaneously the dual variables of all the active moats. For the growth
    of a moat we pay with its potential. We can have two events.
    
    In the first event, some vertex goes tight, $\ie$, the
    inequality for this vertex in the dual program becomes tight.
    In this case we buy this vertex and merge all the neighboring moats,
    setting the potential accordingly to the sum of all previous moats' potentials. We
    declare this new moat inactive whenever it contains a root vertex.

    In the second event, some moat goes tight, $\ie$ the inequality in the dual
    program becomes tight for some set of vertices. This corresponds to the
    situation when the potential of this moat drops to zero. In this case
    we declare this moat inactive and we mark all the previously unmarked
    terminals inside it as marked with the current time. Observe that
    in the dual we do not have these inequalities for sets containing guessed
    vertices $W$. This means, that all the vertices of $W$ will be connected
    to the root vertex.

    We repeat this process until we do not have any active moats. Then we
    start a pruning phase. We consider all the bought vertices in the reverse
    order of buying. We delete a vertex $v$ if the removal of $v$ would
    not disconnect any unmarked terminal or any terminal marked with time
    greater than the time of buying the vertex $v$. We return the pruned set
    of bought vertices as the solution.
    
    A straightforward adaptation of the analysis in \cite{DBLP:conf/swat/ByrkaLM16} implies that the above
    algorithm run with initial penalty $\lambda$ for all vertices in $V'$
    returns a tree $T^\lambda$ satisfying
    inequality~(\ref{e:lmp-guarantee}).

    \subsection{Generalization to non-planar graph classes}
   Note that in our algorithm, we use planarity exclusively by
   exploiting that the LMP algorithm of Byrka et
   al.~\cite{DBLP:conf/swat/ByrkaLM16} for the prize-collecting
   version has ratio~$3$ on planar graphs. Their algorithm, however,
   can be executed on an arbitrary graph class. Thus all our
   calculations can be carried through by replacing $3$ with any
   factor $r\geq 1$ thereby obtaining the following generalization.
    \begin{corollary}
      The above algorithm has performance $(\nicefrac{4}{3}+\eps)r$
      for any graph class where the algorithm of Byrka et
      al.~\cite{DBLP:conf/swat/ByrkaLM16} has a performance ratio of
      $r$.
    \end{corollary}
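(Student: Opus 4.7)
The plan is to retrace the analysis of Sections~\ref{s:lagrangian-relaxation-moat-growing} and~\ref{s:4-eps-approximation} with the constant $3$ replaced by the parameter $r$ at every place where it originates from the LMP guarantee of the Byrka et al.\ algorithm. Planarity is invoked \emph{only} to assert that their prize-collecting algorithm attains ratio~$3$; if on some other graph class it attains ratio~$r$, inequality~(\ref{e:lmp-guarantee}) would instead read
\[
  c(T^\lambda) + r\,\lambda\,(n-|T^\lambda|) \;\leq\; r\cdot\mathrm{DS}_\lambda.
\]
The binary-search derivation in Section~\ref{s:lagrangian-relaxation-moat-growing} manipulates this inequality purely algebraically, so it goes through unchanged and yields the natural generalisation of Lemma~\ref{lem:framework}, namely $\alpha_1 c(T_1)+\alpha_2 c(T_2)\leq (r+\eps)\,\opt$.

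Next, I would observe that the combinatorial tree-merging procedure of Lemma~\ref{l:merging} uses neither planarity nor the specific value of~$r$, so it applies verbatim. Substituting $r$ for $3$ throughout the bookkeeping of Lemma~\ref{lem:four+eps-appr-algor} then produces
\begin{align*}
  c(\mathrm{SOL}_1) &\leq \bigl(r(1+\eps_2)+\alpha\beta\bigr)\,\opt + \eps\cdot\oh(\log(1/\eps_2))\,\opt,\\
  c(\mathrm{SOL}_2) &\leq \frac{r(1+\eps)-(1-\alpha)\beta}{\alpha}\,\opt,
\end{align*}
where $\alpha=\alpha_2$ and $\beta=c(T_1)/\opt$ are determined by the instance.

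The remaining and, in my view, main obstacle is the rebalancing of the free parameters. Ignoring the lower-order $\eps$-terms momentarily, the worst case reduces to
\[
  \max_{\alpha\in(0,1],\,\beta\geq 0}\;\min\!\Bigl\{\,r+\alpha\beta,\;\tfrac{r-(1-\alpha)\beta}{\alpha}\,\Bigr\}.
\]
Equating the two expressions forces $\beta=r(1-\alpha)/(1-\alpha+\alpha^2)$ and yields a common value of $r/(1-\alpha+\alpha^2)$; this is maximised at $\alpha=1/2$ with value $\tfrac{4}{3}r$. Re-introducing the $\eps$-losses and choosing $\eps$ and $\eps_2$ small enough for the $\eps\cdot\oh(\log(1/\eps_2))$ correction to be absorbed then delivers the claimed $(\nicefrac{4}{3}+\eps)r$ bound on any graph class where the algorithm of Byrka et al.\ achieves ratio~$r$.
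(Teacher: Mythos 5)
Your proposal is correct and follows the same route as the paper, which proves the corollary by the single observation that planarity is used only in establishing inequality~(\ref{e:lmp-guarantee}) with constant~$3$, so the factor $3$ may be replaced by $r$ throughout. You make the substitution explicit in Lemma~\ref{lem:framework}, in the two $\mathrm{SOL}_i$ bounds, and you additionally rederive the $\max$--$\min$ balance (obtaining $r/(1-\alpha+\alpha^2)$, maximised at $\alpha=\nicefrac12$ to $\nicefrac43\,r$) rather than invoking Lemma~\ref{lem:alpha-beta-r-estimation} of Könemann et al.\ as the paper does; this is a minor, equivalent variation.
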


  \section{Trying to beat the factor of 4: relation to the partial cover} \label{s:partial-cover}
    Here we draw connections to the recent work on the partial cover
    problems.
    Könemann et al.~\cite{KonemannPS11:Partial-Covering} showed how to obtain
    a $(\nicefrac{4}{3}+\eps)r$-approximation algorithm for the partial cover
    problems using an $r$-approximate LMP algorithm for the corresponding
    prize-collecting version as a black-box. Their approach is roughly
    as follows. First, the most expensive sets from the optimum solution
    are guessed and all sets which are more expensive are discarded.
    Further, the black-box algorithm is used together with binary search
    to find two solutions, one, say $S_1$, feasible but possibly expensive,
    and the other, say $S_2$, infeasible but inexpensive.
    Then the merging procedure is employed to obtain a solution $S_3$.
    Finally, the cheapest solution of the $S_1$ and $S_3$ is returned.

    \subsection{Generalizing the Algorithm of Könemann et al.}
    Extending a folklore reduction from set cover type problems to
    node-weighted Steiner tree problems, we argue that our algorithm
    may be interpreted as a non-trivial generalization of the
    above-outlined algorithm by Könemann et
    al.~\cite{KonemannPS11:Partial-Covering}.

    First of all, the following reduction shows that the partial
    covering problem can be encoded as the quota node-weighted Steiner
    tree problem.  The reduction creates for each element a vertex
    with zero cost and profit $1$. Then, for each set it creates a
    node with the same cost and zero profit and connects it to the
    elements covered by this set.  Finally, the root vertex is added
    and connected to all the set-corresponding nodes. The target quota
    profit is set to be the same as the requirement for the partial
    cover problem.

    For such a reduced instance, we can run the preprocessing step
    from Section~\ref{s:pruning} which will remove the expensive sets
    (we could also employ the Könemann's preprocessing beforehand).
    Then, we would run any LMP algorithm for the prize-collecting
    cover problems within the Lagrangian relaxation framework which
    would indicate two families of sets to merge. Putting it on the
    reduced instance, these would correspond to two trees to merge.
    More precisely, take to the tree the set-corresponding nodes, the
    root vertex and the elements covered by sets.  Now, we can apply
    the merging procedure described in the Lemma~\ref{l:merging} with
    a slight adjustment needed to account for quota variant. In
    particular we modify the notion of cost-effectiveness to account
    profits instead of cardinalities and we also redefine the
    super-cardinality to be the sum of profits.  To retrieve the
    solution from the tree, simply take the sets corresponding to
    non-zero cost nodes in the tree. Finally, output the cheaper of
    the two feasible solutions giving a partial cover with the same
    quality as the one by obtained via the algorithm by Könemann et
    al.

    We remark that the above argument does not work in the reverse
    direction. The graph instances that are created have a very
    specific structure with three node layers ensuring that any
    partial cover solution is automatically connected at no additional
    cost. Achieving connectivity for \emph{general} graphs, however,
    is not implied and guaranteeing this structural property without
    loss in the performance guarantee of the algorithm can be seen as
    a main contribution of our work.

    \subsection{Black-box optimality}
    \label{s:black-box}
    \begin{wrapfigure}[11]{R}{0.4\textwidth}
    \centering
    \vspace{-30pt}
    \includegraphics[width=0.35\textwidth]{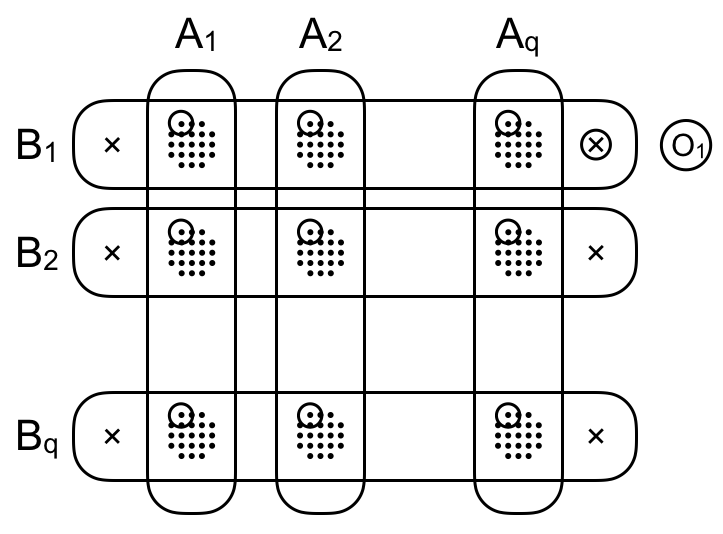}
    \caption{\label{fig:mestre-example}The instance of partial cover given
    by Mestre~\cite{Mestre08:lagrangian-relax-partial-cover}}
    \end{wrapfigure}
    Now, the above reduction, together with a lower bound construction
    by Mestre~\cite{Mestre08:lagrangian-relax-partial-cover} implies
    that our approach is best possible using the LMP algorithm as a
    black-box and without referring to the underlying graph class.  To
    see this, observe, that the Mestre's construction given in Theorem 3.1
    in~\cite{Mestre08:lagrangian-relax-partial-cover}, can be transformed
    to an instance of quota node-weighted Steiner tree instance by
    using the above reduction.

    Here, we repeat the Mestre's example, as we will extend it further.
    Fix some integer constant $q$.
    The instance consists of $q^3$ ground elements aligned
    in the grid of size $q$ by $q$ with $q$ elements in each cell.
    Then we have $q$ sets $A_1,A_2,\cdots A_q$, each covers all elements in the
    corresponding column of a grid. Analogously, we have $q$ sets
    $B_1, B_2, \cdots B_q$ which cover rows. Moreover, each $B_i$ set has
    two more ground elements. Then, we have $q$ sets $O_1, O_2, \cdots O_q$,
    where set $O_i$ covers $i$-th element from each cell of the grid and
    a single element which is also covered by $B_i$. This construction
    is illustrated in Figure~\ref{fig:mestre-example}, where the set $O_1$
    is marked with circles. Finally, costs of sets are defined as follows:
    $c(A_i)=\frac{2}{3}\cdot\frac{r}{q}$,
    $c(B_i)=\frac{4}{3}\cdot\frac{r}{q}$,
    $c(O_i)=\frac{1}{q}$, where $r=3$ in our case.

    Although, the corresponding quota Steiner instance is not planar
    and our algorithm does not exemplify the proof of his Lemma 3.3
    \footnote{This lemma states that there exists an LMP algorithm
      which returns either sets $A$ or $B$ (depending on the initial
      penalty $\lambda$).}, this example still shows that in order to
    beat the factor $\nicefrac{4}{3}\cdot r$ for NW-$k$-MST, we would
    indeed need to further consider the inner-workings of the LMP
    algorithm.  For details regarding this construction, we refer the
    original work of
    Mestre~\cite{Mestre08:lagrangian-relax-partial-cover}.

    \subsection{Inner-workings are not enough}
    Finally, we extend Mestre's example to show that even
    examining the inner-workings of the algorithm of
    Section~\ref{ss:moat-growing} without referring to the underlying
    graph class (such as planar graphs) in the merging procedure is
    not enough to beat the factor of $\nicefrac{4}{3}\cdot
    r$. We do this by giving a similar construction for which 
    Lemma 3.3 of \cite{Mestre08:lagrangian-relax-partial-cover} is
    satisfied, $\ie$ the LMP algorithm of
    Section~\ref{ss:moat-growing} returns either $A$ or $B$ sets. We
    will work with the instance of node-weighted prize-collecting
    Steiner tree problem obtained from Mestre's construction via our
    reduction. But first, we introduce two gadgets that are required
    for the final construction.

    \begin{subsubsection}{The potential aggregation}
      Recall, that the LMP algorithm grows moats around terminals until
      they run out of the initial potential. In the construction we will
      need two kinds of terminals. The first type, call it low-potential
      vertices are meant to become inactive very early.
      The second type of terminals, call them high-potential,
      are supposed to be active all the time
      during the run of the algorithm, $\ie$ until they connect
      to the root vertex.

      This differentiation can be easily achieved by connecting to a prospective
      high-potential vertex, a lot of new vertices.
      Then in the beginning of the GROW phase, the algorithm will make
      out of them a single component with large potential.
    \end{subsubsection}

    \begin{subsubsection}{The handicap gadget}
      We introduce a gadget which allows to significantly reduce the
      buying time of expensive $A$ and $B$ vertices so that they go tight
      at the same time and also much earlier than the cheaper
      $O$-vertices would.

      The gadget consist of a grid of vertices with $q$ columns and $q^2$
      rows. Each $B$-vertex is connected to every vertex of a grid.
      Each $A$-vertex is connected only to all vertices inside $\frac{q}{2}$
      columns. These columns are assigned in a way that each column is assigned
      to at least one $A$ vertex.
      Finally each vertex $O_i$ is connected to all vertices in column $i$.

      It can be seen that in the GROW phase of the LMP algorithm,
      the $B$ vertices gain their contribution to cost two times faster than
      $A$ vertices. Since $B$ vertices are twice as expensive,
      after adding this gadget, the buying time of $A$ and $B$
      should be now roughly the same.
    \end{subsubsection}

    \begin{subsubsection}{Finishing the construction}
      Here, we describe the final construction and analyze the behavior of
      the algorithm from Section~\ref{ss:moat-growing} on this instance.
      We extend the instance from Section~\ref{s:black-box}. Recall, that
      each set correspond now to a Steiner vertex which is also directly
      connected to the root vertex.

      Now, let the vertices which are in the $B$-sets and not in $A$-sets
      ($\ie$ these marked with cross in the Figure~\ref{fig:mestre-example})
      be the only low-potential vertices. Let the all
      other element-corresponding vertices be high-potential vertices.
      On top of that construction, add also the handicap gadget, in which
      each vertex of a grid is also a high-potential vertex.

      Now, the buying time of $A$ and $B$ vertices should be roughly the same.
      However, we insist that $A$ vertices should be bought first, hence we
      introduce some small perturbations to costs of $A$ vertices, $\ie$
      we subtract small $\eps$ from their costs.

      Set now target $k$ appropriately, $\ie$ $k=(2 \cdot q^3)\cdot \gamma + q$,
      where $\gamma$ is the number of additional vertices required for one
      high-potential vertex.

      Now, it can be seen, that there is an initial potential $\lambda$ for
      which all the $A$ vertices will be bought, but not $B$ vertices.
      More precisely, when $A$ vertices are bought,
      all the high-potential vertices get connected to the root,
      hence they become inactive. Also, the
      low-potential vertices will become inactive shortly after buying $A$, but
      before tightening $B$ vertices (this is achieved by setting appropriate
      perturbations to $A$ vertices as mentioned before).
      Now for some slightly larger initial potential, the low-potential
      vertices will also buy $B$ vertices before loosing their potential.
      Observe now, that the pruning phase will now keep all the $B$ vertices,
      but prune all the $A$ vertices.
      \begin{lemma}There exist the initial potential $\lambda$ such that, the
      LMP algorithm of Section~\ref{ss:moat-growing} returns the $A$ solution,
      while for the infinitesimally larger potential $\lambda^+$ it returns
      the $B$ solution.
      \end{lemma}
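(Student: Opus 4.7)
The plan is to analyze the moat-growing algorithm of Section~\ref{ss:moat-growing} on the constructed instance as a function of the initial penalty $\lambda$, to determine the timeline of key events, and to identify a single threshold $\lambda^\ast$ at which the output flips from the $A$-sets to the $B$-sets.

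The first step is a calibration computation. Let $\gamma$ be the number of companion vertices attached to each high-potential element; each high-potential moat then enters the GROW phase with potential roughly $\gamma\lambda$, while each low-potential moat starts with only $\lambda$. Using the handicap gadget, the slack of every $A$-vertex in its dual inequality shrinks at rate $\approx q^3/2$ (an $A$-vertex touches $q/2$ of the $q$ gadget columns, each with $q^2$ rows), whereas the slack at each $B$-vertex shrinks at rate $\approx q^3$. Together with $c(B_i)=2c(A_i)$, both classes would tighten at the same instant $t^\ast=\tfrac{4r}{3q^4}$; subtracting a tiny perturbation $\delta>0$ from each $A$-cost makes the $A$-vertices go tight first at $t_A=t^\ast-\delta$. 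I would also verify that each $O_i$ is adjacent to only one gadget column and $q^2+1$ element-moats, so its slack shrinks at rate $\Theta(q^2)$, well below $q^3/2$, giving $O_i$ a tightening time considerably larger than $t_A$.

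The second step analyzes the post-$A$ dynamics. At time $t_A$ the $A$-vertices are bought, absorbing every grid element, every gadget vertex, and every attached companion into the root component and deactivating them. Only the $2q$ low-potential moats associated with the extras of the $B$-sets remain active; each carries residual potential $\lambda-t_A$ and loses potential at rate $1$, hence depletes at time $\lambda$. Every $B_i$ has exactly two low-potential neighbors, so after $t_A$ its residual slack $\approx q^3\delta$ shrinks at rate $2$, yielding a post-$A$ tightening time of $t_A+q^3\delta/2$. Setting $\lambda^\ast:=t_A+q^3\delta/2$ gives the desired knife-edge: for $\lambda=\lambda^\ast$ the low-potential moats die exactly as the $B_i$'s would tighten, and only the $A$-vertices get bought; for the infinitesimally larger $(\lambda^\ast)^+$ every $B_i$ tightens just in time and all $B$-vertices get bought. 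In both scenarios the $O$-vertices remain far from tight, because their residual slack at time $t_A$ is $\Theta(1/q)$, which a single surviving low-potential neighbor at rate $1$ cannot erode before depletion.

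The third step handles the pruning phase and the returned solution. At $\lambda^\ast$ only the $A$-vertices were bought, which directly yields the $A$-solution. At $(\lambda^\ast)^+$, every grid element and every extra is connected to the root via some $B_j$, and that $B_j$'s buying time is later than that of any $A_i$; hence the reverse-order pruning finds each $A_i$ redundant and removes it, leaving exactly the $B$-solution. The principal obstacle will be the careful choice of parameters --- $\gamma$, the number of companions, and the perturbation $\delta$ --- so that \emph{all} orderings hold simultaneously: low-potential depletion strictly between $A$- and $B$-tightening on one side of $\lambda^\ast$ and strictly after both on the other, and $O$-tightening strictly later than both. Once these orderings are pinned down, continuity of the tightening times in $\lambda$ delivers the exact flip at $\lambda^\ast$.
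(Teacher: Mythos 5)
Your proposal follows essentially the same line of reasoning as the paper's own (informal) justification: $A$ tightens slightly before $B$ owing to the handicap gadget plus the $\delta$-perturbation, the high-potential moats then collapse into the root, and only the two low-potential cross moats per $B_i$ remain to close the residual slack, giving a knife-edge threshold $\lambda^\ast$ below which $A$ is returned and above which pruning removes the now-redundant $A_i$'s in favor of $B$. Your version merely supplies explicit rate bookkeeping where the paper waves at it (note a small dimensional slip: a cost reduction of $\delta$ at rate $q^3/2$ shifts $t_A$ by $2\delta/q^3$, not by $\delta$, so the residual $B$-slack at $t_A$ is $\Theta(\delta)$ rather than $q^3\delta$ — this does not affect the structure of the argument).
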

      Analogous arguments as in the result of
      Mestre~\cite{Mestre08:lagrangian-relax-partial-cover} can be
      used to deduce the following.
      \begin{corollary}\label{cor:lower-bound-internal}
        For any $r>1$ there is an infinite family of graphs where the
        natural moat growing algorithm for
        NW-PC-ST~\cite{DBLP:conf/swat/ByrkaLM16} has a ratio $r$ but
        where \emph{any} feasible solution to the NW-$k$-MST problem using
        only the nodes returned by this algorithm has cost at least
        ~$\nicefrac{4}{3}\cdot r$ times that of an optimum solution.
      \end{corollary}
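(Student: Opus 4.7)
The plan is to assemble the lower-bound family from the construction developed in the preceding subsections and leverage the preceding lemma, which already guarantees that for a critical penalty $\lambda$ the LMP algorithm returns the $A$-family while for $\lambda^+$ it returns the $B$-family. Two things then remain: to verify the $r$-approximation ratio on this family, and to prove that any feasible $k$-MST drawn only from the nodes output by the algorithm costs at least $\nicefrac{4}{3}\cdot r$ times the optimum.

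First I would pin down parameters so that the three set-families have aggregate costs $c(A)=\nicefrac{2r}{3}$, $c(B)=\nicefrac{4r}{3}$ and $c(O)=1$, and I would set the quota $k$ so that taking all $O_i$ is exactly feasible in the reduced quota-Steiner instance from Section~\ref{s:partial-cover}. The optimum is then $\opt = 1$, realized by the $O$-sets together with the root and their profit-carrying element-vertices. The LMP ratio on this family is exactly $r$, witnessed by the $B$-output: its cost $\nicefrac{4r}{3}$ matches $r$ times the dual value certified at $\lambda$. This calibration amounts to inserting the handicap-gadget and potential-aggregation counts into the dual analysis of \cite{DBLP:conf/swat/ByrkaLM16}; the gadgets were designed precisely to make the buying times of $A$- and $B$-vertices align so that the LMP bound is tight.

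Next, the combinatorial core is to show that any feasible $k$-MST restricted to nodes returned at either $\lambda$ or $\lambda^+$ (\ie\ the union of the $A$- and $B$-outputs together with the gadget vertices) has cost at least $\nicefrac{4}{3}\cdot r\cdot \opt$. I would adopt Mestre's counting argument: if such a solution uses $\alpha q$ of the $A$-sets and $\beta q$ of the $B$-sets, then it can cover at most $\alpha\beta q^3 + O(\beta q)$ ground elements (those in chosen rows intersected with chosen columns, plus the two extra elements per chosen $B$-row), so meeting the quota forces $\alpha+2\beta \geq 2 - o(1)$. Minimizing $\nicefrac{2r}{3}\,\alpha+\nicefrac{4r}{3}\,\beta$ subject to this linear constraint gives at least $\nicefrac{4r}{3}\cdot(1-o(1))\cdot\opt$, and letting $q\to\infty$ drives the slack to zero, producing the required infinite family.

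The main obstacle will be the bookkeeping introduced by the auxiliary gadgets. I must rule out that the high-potential vertices, their $\gamma$-many neighbors, or the handicap grid provide a shortcut: they carry no independent profit toward the quota and can only be reached in the $k$-MST via the $A$- or $B$-vertices that are already accounted for. A clean way to handle this is to project any candidate $k$-MST back to the partial-cover instance through the reduction of Section~\ref{s:partial-cover}; the auxiliary vertices collapse away and what remains is precisely a sub-collection of $A$- and $B$-sets required to meet Mestre's quota, at which point Mestre's lower bound applies verbatim and yields Corollary~\ref{cor:lower-bound-internal}.
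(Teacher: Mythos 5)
Your overall plan matches the paper's: the paper states the corollary with only the remark that ``analogous arguments as in the result of Mestre can be used,'' and you flesh out exactly that intended argument---calibrate the gadget counts so the LMP ratio on the family is $r$, invoke Mestre's counting to bound any $k$-MST drawn from $A$- and $B$-vertices, and observe that the auxiliary/gadget vertices collapse under the partial-cover projection, leaving Mestre's bound applicable.

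There is, however, a concrete slip in the counting step that needs fixing. Taking $\alpha q$ of the $A$-sets and $\beta q$ of the $B$-sets covers the \emph{union} of the chosen columns and rows plus the two extras per chosen $B$, namely $(\alpha+\beta-\alpha\beta)q^3+2\beta q$ ground elements, not the intersection count $\alpha\beta q^3+O(\beta q)$ you wrote. With the intersection count, feasibility (quota $q^3+q$) would force $\alpha\beta\geq 1-o(1)$, a constraint that is both too strong (it would give cost $\geq 2r$, not $\nicefrac{4}{3}r$) and simply false ($\alpha=1,\ \beta=\nicefrac{1}{2}$ is feasible with $\alpha\beta=\nicefrac{1}{2}$). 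Starting from the correct union count, the feasibility inequality reads $(1-\alpha)(1-\beta)q^2\leq 2\beta-1$, and a two-case argument (either $\alpha=1$, whence $\beta\geq\nicefrac{1}{2}$; or $1-\alpha\geq\nicefrac{1}{q}$, whence $\beta\geq 1-\nicefrac{1}{q}$) delivers the relaxed linear constraint $\alpha+2\beta\geq 2-o(1)$ that you then use. So the constraint you state is right, but it does not follow from the coverage formula you give. Once corrected, your minimization is clean---the objective $\nicefrac{2r}{3}\alpha+\nicefrac{4r}{3}\beta$ is parallel to the constraint, so the minimum over $[0,1]^2$ is $\nicefrac{4r}{3}(1-o(1))$---and the remainder (gadget projection, letting $q\to\infty$) goes through.
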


    \end{subsubsection}
    \subsubsection{Interpretation}
    In the edge-weighted case of $k$-MST, Garg
    \cite{DBLP:conf/stoc/Garg05} was able to carefully exploit the
    inner workings of the Goemans-Williamson
    algorithm~\cite{DBLP:journals/siamcomp/GoemansW95} for the
    Lagrangian relaxation to match its ratio
    of~$2$. Corollary~\ref{cor:lower-bound-internal} means that our
    approach is in a certain sense optimal and that we would need to
    deviate from this framework to improve on the loss of factor
    $\nicefrac{4}{3}$ in the tree-merging step. This could possibly be
    achieved by exploiting structural properties of the underlying
    graph class or using nodes outside the solution returned by the
    LMP algorithm.

    Even when we exploit planarity it seems to be non-trivial to beat
    factor $4$ along the lines of Garg
    \cite{DBLP:conf/focs/Garg96,DBLP:conf/stoc/Garg05}.
    The changes in the solutions
    by increasing initial potentials of vertices can be much larger
    than those in the edge-weighted variant.  In particular, one can
    observe situations of node-flips in which two potentially distant
    vertices exchange their presence in the solution.  Also, in
    contrast to edge-weighted variant, a single node can be adjacent
    to any number of moats and not only two. This in turn causes the
    large difference in two trees produced by the algorithm. In
    particular, the OLD vertices as described by
    Garg~\cite{DBLP:conf/focs/Garg96} can form any number of connected
    components which may be expensive to connect even when the graph
    is planar.

  \section{Conclusions and comments}
    The $4+\eps$ approximation factor was obtained for the NW-$k$-MST problem
    on planar graphs. In the process we used the Lagrangian Relaxation technique.
    Our work can be interpreted as a generalization of a work on
    partial cover~\cite{KonemannPS11:Partial-Covering}.
    The result by Mestre~\cite{Mestre08:lagrangian-relax-partial-cover} implies
    that our factor is essentially best possible using the underlying LMP
    algorithm for the NW-PC-ST as a black-box. It shows that one would have
    to exploit planarity in the merging process to beat factor 4.

    Our ultimate hope would be to match the factor of $3$ of the LMP
    algorithm. We think that the question if this is possible is very
    interesting and challenging.
  \section{Acknowledgements}
    We would like to thank Zachary Friggstad
    for initial discussions on the problem.
    The authors were supported by the NCN grant number 2015/18/E/ST6/00456.

\bibliographystyle{abbrv}
\bibliography{bibliography}

\appendix

\section{Proof of Lemma~\ref{lem:four+eps-appr-algor}}
\label{sec:proof-lemma-refl}
  To bound the cost of the cheaper of two solutions $\mathrm{SOL}_1$ and
  $\mathrm{SOL}_2$ we employ the following Lemma by Könemann et
  al. \cite{KonemannPS11:Partial-Covering}.
  \begin{lemma}[\cite{KonemannPS11:Partial-Covering}]\label{lem:alpha-beta-r-estimation} For any $r>1$ and $\delta>0$, we have
    \begin{displaymath}
      \max_{\substack{\alpha\in(0,1)\\ \beta\in[0,r]}}\min\left\{\frac{r(1+\delta)-(1-\alpha)\beta}{\alpha},r(1+\delta)+\alpha\beta\right\}=\left(\frac43+O(\sqrt{\delta})\right)r\,.
     \end{displaymath}\qed{}
   \end{lemma}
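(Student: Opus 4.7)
My plan is a direct saddle-point calculation on the two-variable max-min. Writing $f_1(\alpha,\beta) = (r(1+\delta) - (1-\alpha)\beta)/\alpha$ and $f_2(\alpha,\beta) = r(1+\delta) + \alpha\beta$, I first reduce the problem to one dimension. For fixed $\alpha \in (0,1)$, $f_1$ is strictly decreasing in $\beta$ (slope $-(1-\alpha)/\alpha$) while $f_2$ is strictly increasing (slope $\alpha$), and at $\beta = 0$ one has $f_1 = r(1+\delta)/\alpha > r(1+\delta) = f_2$. So $\min(f_1,f_2)$, viewed as a function of $\beta$, is a tent peaking at the unique crossover point $\beta^*(\alpha)$ where $f_1 = f_2$, and the inner supremum over $\beta \in [0,r]$ is attained at $\min(\beta^*(\alpha), r)$.

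Solving $f_1 = f_2$ algebraically and using the identity $\alpha(1-\alpha) + (1 - \alpha + \alpha^2) = 1$ yields
\[
\beta^*(\alpha) = \frac{r(1+\delta)(1-\alpha)}{1-\alpha+\alpha^2}, \qquad V(\alpha) := f_1 = f_2 = \frac{r(1+\delta)}{1-\alpha+\alpha^2}.
\]
The convex quadratic $g(\alpha) = 1 - \alpha + \alpha^2$ attains its minimum $3/4$ at $\alpha = 1/2$, giving $\max_\alpha V(\alpha) = 4r(1+\delta)/3$. The matching lower bound then follows by plugging in $(\alpha,\beta) = (1/2, 2r(1+\delta)/3)$, which makes $f_1 = f_2 = 4r(1+\delta)/3$ and is feasible whenever $\beta^*(1/2) = 2r(1+\delta)/3 \leq r$, i.e.\ $\delta \leq 1/2$.

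For the matching upper bound I need to rule out that the constrained regime $\beta^*(\alpha) > r$ could beat the interior saddle. Solving $\beta^*(\alpha) = r$ gives $\alpha^2 = \delta(1-\alpha)$, so the boundary is active only for $\alpha = O(\sqrt{\delta})$. In that regime the inner optimum is $f_2(\alpha,r) = r(1+\delta) + \alpha r \leq (1 + \delta + O(\sqrt{\delta}))r$, which is strictly less than $4r/3$ for sufficiently small $\delta$ and therefore dominated by the interior saddle. Combining the two directions, the exact value of the max-min is $4r(1+\delta)/3 = (4/3 + O(\delta))r$, which is certainly within the looser $(4/3 + O(\sqrt{\delta}))r$ claimed in the statement since $\delta \leq \sqrt{\delta}$ on $(0,1]$; for larger $\delta$ the asymptotic bound is harmlessly loose.

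The main obstacle is the boundary case analysis for $\beta^*(\alpha) > r$. Conceptually this is elementary single-variable calculus, but one must carefully verify that the regime $\alpha \lesssim \sqrt{\delta}$ does not produce a hidden maximum competing with the clean interior value $4r(1+\delta)/3$; the rest of the argument is a straightforward saddle-point computation whose cleanness hinges on the algebraic identity that collapses the common value of $f_1 = f_2$ into $r(1+\delta)/(1-\alpha+\alpha^2)$.
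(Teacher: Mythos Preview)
The paper does not prove this lemma at all; it is quoted verbatim from K\"onemann et al.\ with only a citation and a \qed{}. So there is no ``paper's own proof'' to compare against.

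Your saddle-point computation is correct. A few remarks. For fixed $\alpha$ your tent argument correctly identifies the inner optimum as $\min(\beta^*(\alpha),r)$, and the algebra giving $\beta^*(\alpha)=r(1+\delta)(1-\alpha)/(1-\alpha+\alpha^2)$ and common value $V(\alpha)=r(1+\delta)/(1-\alpha+\alpha^2)$ is right. Your boundary analysis is also sound: the threshold $\alpha_0$ at which $\beta^*(\alpha_0)=r$ satisfies $\alpha_0^2=\delta(1-\alpha_0)$, so $\alpha_0<\sqrt{\delta}$, and on $[0,\alpha_0]$ the inner optimum $r(1+\delta+\alpha)$ is increasing, attaining $r(1+\delta+\alpha_0)=V(\alpha_0)$ at the join; since $V$ is increasing on $(0,\tfrac12)$ and $\alpha_0<\tfrac12$ whenever $\delta<\tfrac12$, the global maximum is indeed $V(\tfrac12)=\tfrac{4}{3}r(1+\delta)$ in that regime. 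In fact you obtain the sharper error $O(\delta)$ rather than the stated $O(\sqrt{\delta})$; the looser statement in the paper (and in K\"onemann et al.) is presumably an artifact of a cruder estimate, and your caveat about larger $\delta$ being absorbed by the asymptotic notation is adequate for the way the lemma is used.
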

  \begin{proof}[Proof of Lemma~\ref{lem:four+eps-appr-algor}]
    Let $\alpha = \alpha_2$ and $\beta=\frac{c(T_1)}{OPT}$. With this notation we obtain in a similar way as Könemann et
    al. \cite{KonemannPS11:Partial-Covering}
    \begin{align*}
      c(\mathrm{SOL}_1) &\leq c(T_1) + (1+\eps_2)\alpha\cdot c(T_2)
                          + \eps \cdot \oh(\log(1/\eps_2)) \cdot \opt \\
        & \leq  \alpha\cdot c(T_1) + (1-\alpha)\cdot c(T_1)+(1+\eps_2)\alpha\cdot c(T_2)+ \eps \cdot \oh(\log(1/\eps_2)) \cdot \opt\\
        & \leq \left(3(1+\eps_2) + \alpha \beta \right) \cdot \opt
        + \eps \cdot \oh(\log(1/\eps_2)) \cdot \opt, \\
    \end{align*}
    and
    \begin{align*}
            c(\mathrm{SOL}_2) & = c(T_2) \\
                              & = \frac{\alpha\cdot c(T_2)}{\alpha}\\
                              & \leq \frac{(3+\eps)\opt-(1-\alpha)c(T_1)}{\alpha}\\
                              & \leq \frac{3(1+\eps)-(1-\alpha)\beta}{\alpha}
                                \cdot \opt\,.
    \end{align*}
    By setting $r=3$ and $\delta=\eps=\eps_2$ we obtain via
    Lemma~\ref{lem:alpha-beta-r-estimation} that the better of the two
    solutions has cost no more than
    $(4+O(\sqrt{\eps}+\eps\log\nicefrac{1}{\eps}))\cdot
    \opt=(4+O(\sqrt{\eps}))\cdot
    \opt$ completing the proof.\qed
  \end{proof}

\end{document}